\documentclass[conference]{IEEEtran}

\usepackage{amsmath,amssymb,amsthm}
\usepackage{graphicx}
\usepackage{booktabs}
\usepackage{array}
\usepackage{microtype}
\usepackage{placeins}
\usepackage{balance}
\usepackage{xurl}
\usepackage[hidelinks]{hyperref}
\usepackage{orcidlink}
\usepackage{xcolor}

\newif\ifanonymous
\anonymousfalse 

\newcommand{\RC}{\textsc{ReputationChain}}
\newcommand{\clip}{\operatorname{clip}_{[0,1]}}
\newcommand{\R}{\mathit{Rs}}
\newcolumntype{L}[1]{>{\raggedright\arraybackslash}p{#1}}

\theoremstyle{plain}
\newtheorem{proposition}{Proposition}

\title{ReputationChain: Robust Trust Updating for\\Blockchain-Enabled Supply Chains}

\hypersetup{
  pdftitle={ReputationChain: Robust Trust Updating for Blockchain-Enabled Supply Chains},
  pdfsubject={Blockchain-enabled supply-chain reputation and trust management},
  pdfkeywords={blockchain, supply chain, reputation, trust management, collusion, identity multiplicity, sparse-history fairness}
}

\ifanonymous
  \hypersetup{pdfauthor={Anonymous Author(s)}}
  \author{\IEEEauthorblockN{Anonymous Author(s)}}
\else
  \hypersetup{pdfauthor={Adnan Iftekhar, Chengliang Zheng, Xiaohui Cui, Mir Hassan}}
  \author{
  \IEEEauthorblockN{
  Adnan Iftekhar\,\orcidlink{0000-0002-8249-6876}\textsuperscript{a,e},
  Chengliang Zheng\textsuperscript{b},
  Xiaohui Cui\textsuperscript{a,*},
  and Mir Hassan\textsuperscript{c,d}}
  \IEEEauthorblockA{\scriptsize
  \textsuperscript{a}\textit{Key Laboratory of Aerospace Information Security and Trusted Computing, Ministry of Education,}\\[-0.2ex]
  \textit{School of Cyber Science and Engineering, Wuhan University, Wuhan 430072, China}\\[-0.2ex]
  \textsuperscript{b}\textit{School of Artificial Intelligence (School of Blockchain Industry), Chengdu University of Information Technology, Chengdu 610225, China}\\[-0.2ex]
  \textsuperscript{c}\textit{Human Environment Technology Systems Centre, Mykolas Romeris University, Vilnius LT-08320, Lithuania}\\[-0.2ex]
  \textsuperscript{d}\textit{Azerbaijan State Oil and Industry University, 34 Azadliq Avenue, Baku, Azerbaijan}\\[-0.2ex]
  \textsuperscript{e}\textit{The University of Faisalabad, Faisalabad, Pakistan}\\[-0.2ex]
  Emails: adnan@whu.edu.cn; zhengcl@cuit.edu.cn; xcui@whu.edu.cn; mir.hassan@mruni.eu\\[-0.2ex]
  \textsuperscript{*}Corresponding author: Xiaohui Cui (xcui@whu.edu.cn).}
  }
\fi

\begin{document}
\maketitle

\begin{abstract}
A blockchain can preserve the record of a supply-chain interaction, but an intact ledger says nothing about whether the participant behind that record deserves trust in the next risk-sensitive transaction. Existing reputation designs concentrate on product evidence, global feedback aggregation, or review authenticity, and three problems that dominate permissioned supply chains receive far less attention: inflation through repeated bilateral interactions, the advantage gained from identity multiplicity, and decay rules that quietly punish honest participants whose history is simply sparse. We present \RC, a participant trust framework that treats the blockchain as an evidence and provenance layer rather than as the source of trust itself. Governed interaction outcomes become bounded evidence; repeated interactions between the same pair are discounted, low counterparty diversity is penalized, governance-supplied identity confidence weights positive evidence, and scores decay toward a neutral prior at a rate set by verified interaction volume. Identity, contract, outcome, and update provenance stay on chain, while the nonlinear computation runs off chain and is checked on chain for admissibility. In controlled simulations over 30 seeded runs with matched interaction traces, the full model cuts mean collusive gain to 0.1443 (against 0.3688 for naive mean evidence and 0.3585 for static decay). With ten identities under one controller, the reputation inflation ratio drops to 0.8723 while three comparison baselines stay above 1.08. On identical newcomer traces, volume-aware decay raises mean newcomer reputation from 0.6626 to 0.7589 and lowers the false low-trust rate from 0.3633 to 0.1683; seed-paired analyses confirm that these differences hold run by run. What the evidence supports is a bounded reduction in reputation distortion, not attacker detection, and it is simulation evidence: deployment measurement and calibration against operational data remain necessary before production use.
\end{abstract}

\begin{IEEEkeywords}
blockchain, supply chain, reputation, trust management, collusion, Sybil attack, newcomer fairness
\end{IEEEkeywords}

\section{Introduction}
Blockchain platforms now serve as shared recordkeeping infrastructure across many supply-chain consortia, and the case for them usually rests on traceability and auditability \cite{kumar2025blockchainscm,karaduman2025blockchainreview}. Those properties settle disputes over whether an event was recorded and whether the record was later altered. They leave a different question open. Should this supplier, this carrier, or this retailer be trusted in the next transaction that creates real financial or operational exposure? A ledger can hold a perfectly preserved record of a delayed settlement or a failed delivery and still offer no defensible way of turning that evidence into a participant-level trust signal.

The question is hardest in inclusive supply chains, where small firms, cross-border partners, temporary collaborators, and third-party providers arrive with little shared history \cite{desouzamiguel2026supplierdiversity,medina2024blockchainagency}. A trust mechanism for this setting has to steer between two failure modes at once. If repeated successful interactions inside a narrow group are allowed to compound without limit, reputation becomes something a closed clique can manufacture. If thin history is read as evidence of unreliability, the mechanism ends up protecting incumbents rather than the supply chain, and it locks newcomers out of exactly the interactions that would let them build a record.

Prior work supplies important pieces of the answer. TrustChain fuses multiple evidence sources for food supply chains \cite{malik2019trustchain}, and DeTRM models commodity and participant trust together in blockchain-supported supply chains \cite{putra2022detrm}. MeritRank bounds the benefit of Sybil identities in graph-based reputation \cite{nasrulin2022meritrank}, while BlockRep ties review authenticity to cryptographic purchasing evidence \cite{zhao2025blockrep}. Each of these solves an adjacent problem. None of them jointly targets participant-level trust updating under governed interactions, where repeated bilateral reinforcement, governance-side identity assurance, and sparse history all feed into later risk-sensitive decisions.

We designed \RC\ with that gap in mind. The framework uses the blockchain as an evidence and state-management layer, not as the source of trust. Registered participants interact under governing contracts; verified completion, delay, failure, and dispute events become bounded evidence; and the reputation layer then applies repeated-pair discounting, counterparty-diversity adjustment, identity-confidence weighting, and volume-aware decay toward a neutral prior.

This paper makes four contributions.
\begin{enumerate}
  \item We define an auditable participant trust architecture that separates blockchain-recorded evidence from the numerical reputation computation.
  \item We develop a bounded, event-indexed update model that limits the influence of repeated bilateral interactions and of identity multiplicity while preserving adverse evidence in full.
  \item We introduce volume-aware decay toward a neutral prior, which lowers the rate at which honest newcomers are falsely classified as low trust.
  \item We evaluate the mechanisms against baselines, through targeted ablations, and under parameter sensitivity sweeps, with 30 seeded runs and matched traces per condition, and we report seed-paired effects with confidence intervals.
\end{enumerate}

The claims are deliberately narrow. \RC\ does not detect every colluder, and it does not identify Sybil identities; what it does, under the tested assumptions, is reduce how much reputation an attacker can extract. The distinction is worth stating up front, because our strongest identity-multiplicity experiment still leaves a high false trust elevation rate, and we report that number rather than bury it.

\section{Related Work and Research Gap}
\subsection{Reputation in Distributed Systems}
EigenTrust, PeerTrust, and the Beta Reputation System set the main directions for distributed reputation aggregation \cite{kamvar2003eigentrust,xiong2004peertrust,josang2002beta}: transitive global trust, feedback weighted by transaction context, and probabilistic estimates derived from binary outcomes. All three grew out of open peer communities, and their assumptions show it. In the setting we target, identities are registered, interactions are governed by contracts, and a reputation score feeds directly into a participant's future business exposure.

The attack surface of reputation systems is well catalogued: collusion, reciprocal inflation, badmouthing, whitewashing, and identity manipulation \cite{hoffman2009reputationattacks}, with the Sybil attack as the canonical demonstration that cheap identity creation can distort distributed decisions \cite{douceur2002sybil}. Permissioned membership raises the cost of entry but does not close the identity problem. A participant can still control several valid identities through subsidiaries or affiliated accounts, and governance processes themselves can hand out identity assurance of uneven quality.

\subsection{Blockchain-Supported Supply-Chain Trust}
TrustChain combines sensor, buyer, and regulator ratings under time decay \cite{malik2019trustchain}; DeTRM links commodity evidence, agreement fulfilment, authority endorsement, and trust propagation \cite{putra2022detrm}; and a broader line of work integrates blockchain with supply-chain trust management and smart manufacturing \cite{wu2022scmtrust}. Together, these systems establish that a blockchain can preserve evidence and automate trust-related state transitions.

Product evidence and participant evidence are not the same thing, though. A sensor stream can support claims about the condition of goods without saying anything about how reliably a counterparty behaves across partners, and a participant can perform flawlessly for one closely connected partner while remaining untested everywhere else. \RC\ therefore attends to the shape of the interaction evidence, not just its existence.

MeritRank offers a stronger formal treatment of Sybil-tolerant graph reputation \cite{nasrulin2022meritrank}, and BlockRep offers stronger cryptographic protection for review authenticity and anonymity \cite{zhao2025blockrep}. We do not attempt to replace either. The gap we address is narrower and operational: bounded participant trust updates inside a governed supply-chain consortium. Table~\ref{tab:positioning} summarizes where each system sits relative to ours.

One naming clarification is needed. RepChain \cite{huang2021repchain} uses validator reputation for shard formation, leader selection, and incentives, and therefore operates inside the consensus layer. \RC\ operates above it: the scores describe business participants and never touch validator selection or consensus voting.

\begin{table*}[t]
\centering
\caption{Positioning of \RC\ relative to representative reputation systems.}
\label{tab:positioning}
\scriptsize
\setlength{\tabcolsep}{4pt}
\begin{tabular}{L{2.15cm}L{2.65cm}L{4.15cm}L{6.15cm}}
\toprule
\textbf{System} & \textbf{Primary setting} & \textbf{Main mechanism} & \textbf{Boundary relative to \RC} \\
\midrule
TrustChain \cite{malik2019trustchain} & IoT food supply chain & Sensor, buyer, and regulator evidence with time decay & Stronger product and multisource evidence context; weaker focus on repeated-pair discounting and newcomer fairness. \\
DeTRM \cite{putra2022detrm} & Blockchain supply chain & Commodity trust, agreement fulfilment, authority endorsement, and trust propagation & Stronger commodity and production-operation modelling; different from bounded participant updates under narrow interaction diversity. \\
MeritRank \cite{nasrulin2022meritrank} & DAO feedback graphs & Sybil-tolerant graph aggregation and decay & Stronger graph-level Sybil theory; not designed around permissioned contractual outcomes and supply-chain participation. \\
BlockRep \cite{zhao2025blockrep} & IIoT retail reviews & Purchasing proofs and cryptographic review authentication & Stronger review authenticity and anonymity; does not define dynamic participant trust under sparse history. \\
\RC & Permissioned supply-chain participation & Repeated-pair discount, diversity, identity confidence, and volume-aware decay & Targets bounded participant trust updates under concentrated evidence, identity uncertainty, and sparse history. \\
\bottomrule
\end{tabular}
\end{table*}

\section{System and Threat Model}
\subsection{Participants and Governed Interactions}
Let
\begin{equation}
E=\{E_1,E_2,\ldots,E_z\}
\end{equation}
denote the registered supply-chain participants. Each participant is bound to a blockchain-recognized identity with a public key, an organization, and an admission state. Trust-sensitive interactions form a directed graph $IG=(E,I)$, where an edge $I_{a,b}$ records a governed interaction from $E_a$ to $E_b$ under a contract or policy object.

Every interaction leaves a record: the request identifier, the participating identities, a contract reference, an outcome class, a timestamp, and an evidence digest. Outcome classes cover successful completion, delay, failure, default, and dispute. For every admitted entity, the framework maintains a reputation score $\R_i\in[0,1]$.

Figure~\ref{fig:architecture} shows how the pieces fit together. The blockchain plane stores admission records, interaction state, settlement or dispute evidence, and the update history. The reputation service reads committed evidence, computes the numerical update, and submits a bounded result together with its evidence and parameter references. A decision layer can then consume the score for credit, delivery, compliance, or any other risk-sensitive decision.

\begin{figure}[t]
\centering
\includegraphics[width=0.96\columnwidth]{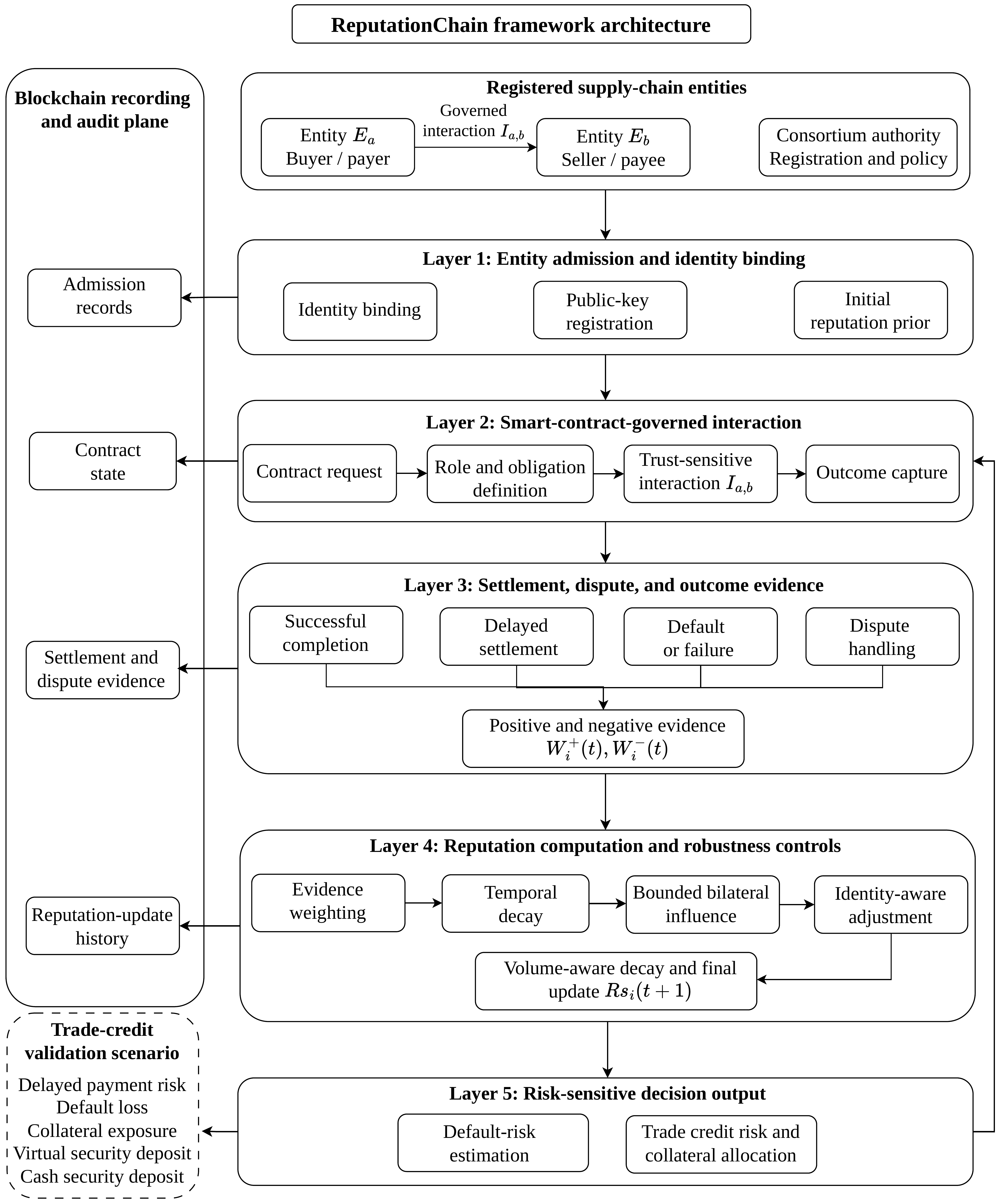}
\caption{\RC\ architecture. Blockchain preserves identity, contract, outcome, and update provenance. Numerical reputation computation is separated from the smart-contract path. The trade-credit block is an example of downstream risk-sensitive use, not a separate trust mechanism.}
\label{fig:architecture}
\end{figure}

\subsection{Adversary and Fairness Conditions}
The evaluation covers three bounded conditions.

Under \textit{repeated bilateral collusion}, a small set of entities keeps completing successful-looking interactions with the same counterparties, trying to pass correlated evidence off as independent.

Under \textit{identity multiplicity}, one controller operates $k$ registered identities. The system receives an identity-confidence value $\phi_i\in[0,1]$ for each identity as governance input, and \RC\ makes no attempt to infer $\phi_i$ from behavior. In an operational consortium that value would come from verifiable assurance evidence: legal-entity registration, the certificate assurance tier, independent organizational verification, or hardware attestation. Assurance that is set too high weakens the attenuation, which is exactly what the $\phi$ sweep in Section~\ref{sec:sensitivity} measures.

Under \textit{sparse-history fairness}, honest newcomers simply interact less often than established entities, and the risk is that time decay alone pushes them into false low-trust classification without any adverse behavior on their part.

Several conditions sit outside the evaluated scope: certificate-authority compromise, identity replacement, on-off attacks, strategic badmouthing, delayed reporting, and production network failures. We list them explicitly because, taken together, they define the current security boundary of the design.

\section{ReputationChain Model}
\subsection{Bounded Event Evidence}
For event $m$, let $Y_i^{(m)}\in\{0,1\}$ be the verified outcome for entity $E_i$, and let $r_m\in[0,1]$ be the interaction risk. With a neutral prior $R_0=0.5$, the bounded event evidence is
\begin{equation}
 e_i^{(m)}=\clip\left(\frac{1}{2}+r_m\left(Y_i^{(m)}-\frac{1}{2}\right)\right).
 \label{eq:eventevidence}
\end{equation}
A successful high-risk event pushes the evidence further above the prior, and a failed high-risk event pushes it further below.

One asymmetry runs through the whole design: when reinforcement looks suspicious, only positive evidence is weakened, and adverse evidence always lands at full strength. Without that asymmetry, colluders could turn the defense against itself and use the attenuation to soften the impact of their own failures.

\subsection{Repeated-Pair and Diversity Control}
Let $c_{i,j}^{(m)}$ count the observed interactions between $E_i$ and counterparty $E_j$ up to event $m$. The repeated-pair weight is
\begin{equation}
 \omega_{i,j}^{(m)}=\frac{1}{1+\eta(c_{i,j}^{(m)}-1)}, \qquad \eta\geq0,
 \label{eq:omega}
\end{equation}
which gives the first interaction full weight and every later interaction with the same pair progressively less.

Diversity enters through
\begin{equation}
D_i^{(m)}=\frac{|\mathcal{N}_i^{(m)}|}{N_i^{(m)}},
\label{eq:diversity}
\end{equation}
the ratio of distinct counterparties to total interactions; a low value signals concentrated evidence. In the collusion scenario, the raw attenuation is
\begin{equation}
q_i^{(m)}=\min\{1,\omega_{i,j_m}^{(m)}D_i^{(m)}\}.
\label{eq:qcoll}
\end{equation}

Figure~\ref{fig:collusionmechanism} contrasts this with the naive rule, which treats every repeated reciprocal event as if it were independent.

\begin{figure}[t]
\centering
\includegraphics[width=\columnwidth]{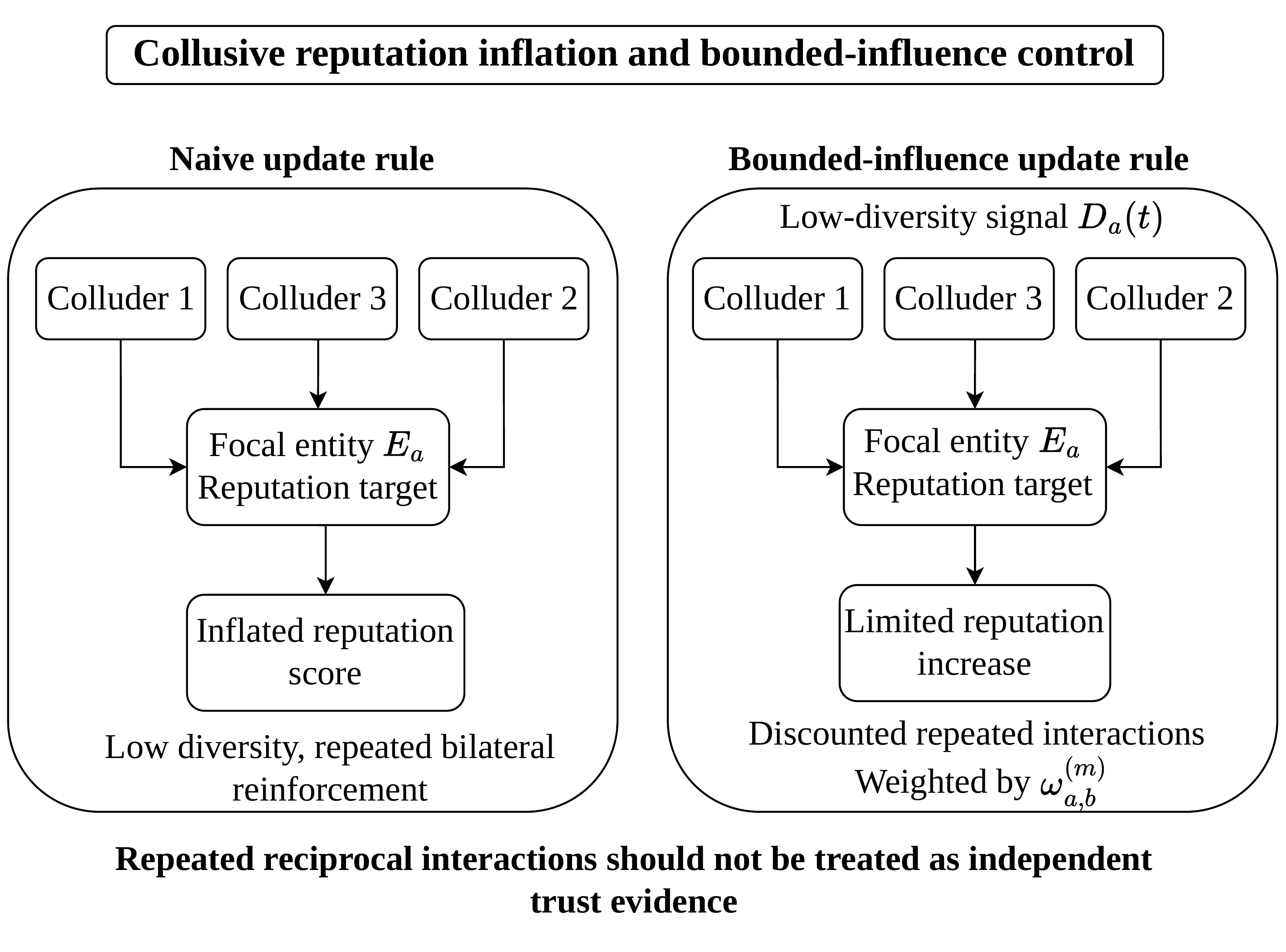}
\caption{Repeated reciprocal evidence can inflate reputation under a naive rule. \RC\ bounds this reinforcement through repeated-pair weight $\omega_{i,j}^{(m)}$ and diversity $D_i^{(m)}$.}
\label{fig:collusionmechanism}
\end{figure}

\subsection{Identity-Aware Adjustment}
For identity multiplicity, the attenuation becomes
\begin{equation}
q_i^{(m)}=\min\{1,\phi_iD_i^{(m)}\}.
\label{eq:qsybil}
\end{equation}
A lower $\phi_i$ attenuates positive evidence more strongly. Nothing here detects multiple identities; the mechanism is a controlled weighting applied after governance has already assigned an assurance level.

Both scenario-specific factors pass through a softening step,
\begin{equation}
s_i^{(m)}=1-\rho_{\mathrm{soft}}(1-q_i^{(m)}),
\quad 0\leq\rho_{\mathrm{soft}}\leq1,
\label{eq:soft}
\end{equation}
and the adjusted evidence is
\begin{equation}
\Delta_i^{(m)}=
\begin{cases}
R_0+s_i^{(m)}(e_i^{(m)}-R_0), & e_i^{(m)}>R_0,\\
e_i^{(m)}, & e_i^{(m)}\leq R_0.
\end{cases}
\label{eq:delta}
\end{equation}
The sequential update is
\begin{equation}
\R_i^{(m+1)}=\clip\left((1-\lambda)\R_i^{(m)}+\lambda\Delta_i^{(m)}\right),
\label{eq:update}
\end{equation}
where $0<\lambda\leq1$ sets how strongly the score responds to current evidence.

\subsection{Volume-Aware Decay and Newcomer Fairness}
A trust score should not sit frozen forever, but decay toward zero would treat inactivity as if it were failure. \RC\ instead decays toward the neutral prior $R_0$:
\begin{equation}
\R_i^{\mathrm{decay}}(t+\Delta t)=R_0+(\R_i(t)-R_0)
\exp\left(-\frac{\gamma\Delta t}{1+\kappa V_i(t)}\right),
\label{eq:decay}
\end{equation}
where $V_i(t)$ is the verified interaction volume, $\gamma>0$ is the base decay rate, and $\kappa\geq0$ controls the volume sensitivity. Setting $\kappa=0$ recovers static decay.

The interpretation needs some care. Higher volume slows the drift toward the neutral prior, which preserves established evidence; it hands nothing to newcomers for free. Newcomer fairness is therefore measured through the false low-trust rate, not through any unconditional protection of scores.

\subsection{Formal Properties}
\begin{proposition}[Boundedness]
If $\R_i^{(m)}\in[0,1]$, $\Delta_i^{(m)}\in[0,1]$, and $0<\lambda\leq1$, then the update in \eqref{eq:update} remains in $[0,1]$.
\end{proposition}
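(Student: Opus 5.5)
The argument is a convexity observation, with the clip serving only as a safeguard. Write $x=(1-\lambda)\R_i^{(m)}+\lambda\Delta_i^{(m)}$ for the argument of $\clip$ in \eqref{eq:update}. The plan is to show directly that $x\in[0,1]$, so the clip is the identity and the update equals $x$. Separately, by definition $\clip$ maps every real number into $[0,1]$, so the conclusion holds even without the convexity step. I would present both, because the convexity argument carries the real content: the clip never actually triggers under the stated hypotheses, and so it does not distort the evidence.

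For the convexity step, I would first note that $0<\lambda\leq1$ gives $1-\lambda\in[0,1)$ and $\lambda\in(0,1]$. Both weights are nonnegative and sum to one, so $x$ is a convex combination of $\R_i^{(m)}$ and $\Delta_i^{(m)}$. A convex combination of two points of the interval $[0,1]$ lies between them. Concretely, $x\geq(1-\lambda)\cdot0+\lambda\cdot0=0$ and $x\leq(1-\lambda)\cdot1+\lambda\cdot1=1$, because each weight multiplies a quantity lying between $0$ and $1$.

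The only step needing care is the hypothesis $\Delta_i^{(m)}\in[0,1]$. It is assumed in the statement, but I would remark that it is automatic from the construction, so the proposition applies to every update the model produces. By \eqref{eq:eventevidence}, $e_i^{(m)}\in[0,1]$ because of its own clip. If $e_i^{(m)}\leq R_0$, then $\Delta_i^{(m)}=e_i^{(m)}\in[0,1]$. If $e_i^{(m)}>R_0$, then $\Delta_i^{(m)}$ lies between $R_0$ and $e_i^{(m)}$, provided $s_i^{(m)}\in[0,1]$. That holds because $q_i^{(m)}\in[0,1]$: it is a minimum with $1$ of a product of nonnegative factors. Since $\rho_{\mathrm{soft}}\in[0,1]$, it follows that $s_i^{(m)}=1-\rho_{\mathrm{soft}}(1-q_i^{(m)})\in[0,1]$.

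Finally, induction on $m$ from an initial score $\R_i^{(0)}\in[0,1]$ (for example $R_0$) extends the bound to all events. The same reasoning covers \eqref{eq:decay}, whose result is again a convex combination of $R_0$ and $\R_i(t)$ with weight $\exp(\cdot)\in(0,1]$.
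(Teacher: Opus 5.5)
Your proof is correct and follows the paper's own route: the paper also argues that the unclipped argument of \eqref{eq:update} is a convex combination of two values in $[0,1]$, with the clip only preserving the interval. Your extra check that $\Delta_i^{(m)}\in[0,1]$ follows automatically from \eqref{eq:eventevidence}--\eqref{eq:delta}, together with the induction over events and the extension to \eqref{eq:decay}, is valid and goes beyond the paper's sketch.
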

\begin{proof}[Proof sketch]
The un-clipped term is a convex combination of two bounded values. Clipping preserves the interval.
\end{proof}

\begin{proposition}[Sublinear repeated-pair influence]
For $N$ repeated interactions between the same pair and $\eta>0$,
\begin{equation}
\sum_{n=1}^{N}\frac{1}{1+\eta(n-1)}
\leq 1+\frac{1}{\eta}\ln(1+\eta(N-1)).
\end{equation}
Thus raw repeated-pair weight grows at most logarithmically rather than linearly.
\end{proposition}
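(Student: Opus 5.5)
The plan is to separate the first summand from the rest and bound the remaining terms by an integral. The function $f(x)=1/(1+\eta x)$ is positive and strictly decreasing on $[0,\infty)$ whenever $\eta>0$. The summand with index $n$ equals $f(n-1)$, so the sum can be written as
\begin{equation*}
\sum_{n=1}^{N}\frac{1}{1+\eta(n-1)} = 1+\sum_{k=1}^{N-1} f(k).
\end{equation*}
The leading $1$ is the weight $\omega=1$ that \eqref{eq:omega} assigns to the first interaction. The degenerate case $N=1$ is immediate: both sides equal $1$, since $\ln 1=0$.

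For $N\geq2$ I would apply the standard right-endpoint comparison for decreasing functions. For each $k\geq1$ and every $x\in[k-1,k]$ we have $f(k)\leq f(x)$, so $f(k)\leq\int_{k-1}^{k}f(x)\,dx$. Summing over $k=1,\ldots,N-1$ gives
\begin{equation*}
\sum_{k=1}^{N-1}f(k)\leq\int_{0}^{N-1}\frac{dx}{1+\eta x}=\frac{1}{\eta}\ln\bigl(1+\eta(N-1)\bigr).
\end{equation*}
The integral evaluates directly through the antiderivative $\eta^{-1}\ln(1+\eta x)$. Adding back the leading $1$ yields exactly the claimed inequality.

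For the concluding remark, I would note that the right-hand side is $O(\eta^{-1}\ln N)$ for fixed $\eta>0$. Hence the cumulative raw repeated-pair weight grows logarithmically in $N$, whereas the naive rule (the case $\eta=0$) gives the linear total $N$. I do not expect a genuine obstacle. The one point needing care is the indexing: comparing the first term to an integral as well would cost the additive constant. The bound is tight in form precisely because the $n=1$ term is kept outside the integral and the comparison is aligned so that every integrand interval lies to the left of the point being bounded.
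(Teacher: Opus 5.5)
Your proposal is correct and follows essentially the same route as the paper's sketch: the paper also keeps the first term (weight $1$) separate and bounds the remaining terms by the integral of the decreasing function $1/x$, which after the substitution $x\mapsto 1+\eta x$ is exactly your right-endpoint comparison for $f(x)=1/(1+\eta x)$ on $[0,N-1]$. Your version additionally writes out the $N=1$ case and the antiderivative, which the sketch leaves implicit.
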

\begin{proof}[Proof sketch]
Apply the integral bound for the decreasing function $1/x$ to the terms after the first.
\end{proof}

\begin{proposition}[Volume-aware attenuation]
For $\kappa\geq0$, volume-aware decay moves a score toward $R_0$ no faster than the corresponding static decay rule.
\end{proposition}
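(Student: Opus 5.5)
The plan is to compare the two decay rules through their contraction factors on the deviation from the neutral prior. By \eqref{eq:decay}, the deviation after an elapsed interval $\Delta t\ge 0$ is
\begin{equation*}
\R_i^{\mathrm{decay}}(t+\Delta t)-R_0=\bigl(\R_i(t)-R_0\bigr)\,\alpha_\kappa,
\qquad
\alpha_\kappa=\exp\!\left(-\frac{\gamma\Delta t}{1+\kappa V_i(t)}\right).
\end{equation*}
The corresponding static rule is the case $\kappa=0$, with factor $\alpha_0=\exp(-\gamma\Delta t)$. I read ``moves a score toward $R_0$ no faster'' as two statements. First, both rules keep the score on the same side of $R_0$. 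Second, the remaining distance satisfies $|\R_i^{\mathrm{decay}}-R_0|$ under volume-aware decay $\ge$ the same quantity under static decay, from the same starting score over the same interval.

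\textbf{Step 1.} Use $\gamma>0$, $\Delta t\ge0$, $\kappa\ge0$ and $V_i(t)\ge0$, since verified volume is a count. These give $1+\kappa V_i(t)\ge1$, and so
\begin{equation*}
0\le\frac{\gamma\Delta t}{1+\kappa V_i(t)}\le\gamma\Delta t .
\end{equation*}

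\textbf{Step 2.} Because $x\mapsto e^{-x}$ is decreasing, Step 1 yields $\alpha_0\le\alpha_\kappa\le1$. Both factors are positive, so neither rule changes the sign of $\R_i(t)-R_0$; each only shrinks the deviation. Multiplying $\alpha_0\le\alpha_\kappa$ by $|\R_i(t)-R_0|$ gives $|\R^{\mathrm{vol}}-R_0|\ge|\R^{\mathrm{static}}-R_0|$, which is the claim for a single interval.

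\textbf{Step 3.} An equivalent rate statement follows immediately. The instantaneous rate of approach is $\gamma/(1+\kappa V_i)$ in place of $\gamma$. For a trajectory split into several decay intervals with volume updated between them, I would apply Step 2 interval by interval. The product of the volume-aware factors is then at least the product of the static factors, because each factor is at least its static counterpart and all factors are positive. The cumulative distance to $R_0$ is therefore also no smaller.

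\textbf{Main obstacle.} The difficulty is interpretive rather than computational. The statement must be pinned down as comparing the two rules from the same start over the same elapsed time with the same volume record. Interleaved evidence updates \eqref{eq:update} also have to be excluded, since they act identically on both rules but could break a pathwise comparison. I would state this scope explicitly and restrict the claim to pure decay phases.
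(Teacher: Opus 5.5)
Your argument is correct and follows the same route as the paper's sketch, which notes that $\gamma/(1+\kappa V_i)\leq\gamma$ makes the volume-aware exponential factor at least the static one for $\Delta t\geq0$. Your additions are sound refinements, not a different approach: stating $V_i(t)\geq0$ explicitly (the paper uses it silently), noting that the sign of $\R_i(t)-R_0$ is preserved, and extending to products over pure decay intervals while excluding interleaved evidence updates.
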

\begin{proof}[Proof sketch]
Since $\gamma/(1+\kappa V_i)\leq\gamma$, the volume-aware exponential factor is at least the static factor for $\Delta t\geq0$.
\end{proof}

These properties do not amount to attack detection. They establish boundedness and fix the direction of each mechanism; the experiments then measure how far those mechanisms actually move the outcomes.

\subsection{Blockchain Realization Boundary}
The blockchain stores admission, interaction, outcome, and update records. Every submitted update carries an evidence digest, a parameter-set version, a sequence number, and the score itself, and the smart contracts check authorization, evidence existence, contract state, update order, parameter version, and score bounds. The exponential calculation in \eqref{eq:decay} and the event sequence in \eqref{eq:update} run in an off-chain reputation service.

This separation buys two things and costs one. It keeps platform-specific floating-point behavior and repeated nonlinear computation out of the smart contracts. The cost is that an on-chain admissibility check cannot prove that a submitted value is the unique correct result. Any consortium member can recompute an update from the committed evidence and the referenced parameter version, so numerical correctness is auditable rather than cryptographically enforced. Work on trust as a service argues for the same split between evidence management and reputation computation \cite{olariu2024trustservice}.

\section{Evaluation}
\subsection{Questions, Metrics, and Protocol}
The evaluation asks three questions.
\begin{itemize}
  \item \textbf{Q1:} Does repeated-pair control reduce collusive reputation gain?
  \item \textbf{Q2:} Does identity-aware weighting reduce the advantage of identity multiplicity?
  \item \textbf{Q3:} Does volume-aware decay reduce false low-trust classification of honest newcomers?
\end{itemize}

Four metrics carry the results: collusive gain $G_{\mathrm{coll}}$, false trust elevation rate (FTER), reputation inflation ratio (RIR), and false low-trust rate (FLTR):
\begin{align}
G_{\mathrm{coll}} &= \overline{\R}_{\mathcal{C}}(T)-\overline{\R}_{\mathcal{C}}(0),\\
\mathrm{FTER} &= \frac{N_{\mathrm{malicious\ high\ trust}}}{N_{\mathrm{malicious}}},\\
\mathrm{RIR} &= \frac{\overline{\R}_{\mathrm{adversarial}}}{\overline{\R}_{\mathrm{honest}}},\\
\mathrm{FLTR} &= \frac{|\{i\in\mathcal{N}_{h}:\R_i(T)<\tau\}|}{|\mathcal{N}_{h}|}.
\end{align}
An RIR above one means the adversarial mean sits above the honest mean, and FLTR counts honest newcomers incorrectly left below the threshold $\tau$.

The simulator is written in Python and driven by fixed configuration files. Every reported condition aggregates 30 seeded runs, and within a scenario all methods consume the identical generated trace for each seed; this includes the static and volume-aware newcomer comparison, which shares traces in the same way. Sensitivity sweeps reuse the same seed identifiers across parameter values. Differences between methods therefore cannot be attributed to workload noise. Table~\ref{tab:settings} gives the populations, the outcome probabilities, and the model settings needed to reconstruct each scenario.

\begin{table*}[t]
\centering
\caption{Controlled simulation settings and workload-generation assumptions. Interaction risk $r_m$ is sampled uniformly from $[0.60,1.00]$ in every scenario.}
\label{tab:settings}
\scriptsize
\setlength{\tabcolsep}{4pt}
\begin{tabular}{L{1.9cm}L{4.2cm}L{5.2cm}L{4.2cm}}
\toprule
\textbf{Scenario} & \textbf{Population and workload} & \textbf{Outcome and interaction assumptions} & \textbf{Model settings} \\
\midrule
Global & 30 paired seeded runs & Initial and neutral reputation $R_0=0.50$; trust threshold $\tau=0.65$ & Update mixing $\lambda=0.35$ \\
Collusion & 100 entities; 10 colluders; 2,000 events over 60 days & Collusion-event probability 0.30; honest success 0.88; collusive success 0.98 & $\eta=0.80$; $\rho_{\mathrm{soft}}=0.60$ \\
Identity multiplicity & 80 honest entities; one controller; $k\in\{1,3,5,10\}$; 1,500 events & Base attack probability 0.25, raised by 0.03 per additional identity and capped at 0.80; honest success 0.88; controlled-identity positive outcome 0.99; intra-group fraction 0.60 & $\phi=0.35$; $\rho_{\mathrm{soft}}=0.55$ \\
Sparse history & 20 high-volume, 20 medium-volume, and 20 newcomer entities; 40 days & Daily Poisson interaction rates 4.0, 2.0, and 0.35; honest success 0.90 & $\gamma=0.10$; $\kappa=0.30$; at most 30 recovery events \\
Comparison models & Same trace per seed as the full model & B3 draws correlated synthetic sensor, buyer, and regulator ratings from the same event, with noise 0.05 & B2 decay 0.003 per event; B3 weights $(0.40,0.40,0.20)$ and decay 0.003 per event \\
\bottomrule
\end{tabular}
\end{table*}

\subsection{Baselines and Ablations}
Six labels are used across the scenarios.
\begin{itemize}
  \item B1 is naive mean evidence without sequential update, temporal decay, diversity, or identity weighting.
  \item B2 is a static-decay baseline.
  \item B3 is a TrustChain-inspired time-decayed multisource proxy \cite{malik2019trustchain}.
  \item B4 removes only the repetition-count discount. Diversity and softening remain active.
  \item B5 removes identity-confidence adjustment.
  \item B6 is full \RC.
\end{itemize}

B3 deserves a caveat. It is not a full TrustChain implementation, because independent real sensor, buyer, and regulator streams were not available; the proxy generates correlated synthetic components from the same underlying event. Its results should not be read as a general defeat of TrustChain. What it tests is whether time-decayed multisource aggregation alone holds up when the sources are correlated and manipulable.

\subsection{Seed-Paired Analysis}
Because every method sees the same trace for a given seed, the mechanism comparisons are naturally paired. The scenario tables report mean $\pm$ standard deviation for the raw metrics; Table~\ref{tab:paired} reports the mean paired effect with a 95\% Student-$t$ confidence interval, together with a one-sided Wilcoxon signed-rank test as a distribution-free check on the expected direction.\footnote{Paired effects are computed from unrounded run-level values, so they may differ in the last digit from the difference of the rounded means shown in the scenario tables.} Readers should note that the paired intervals are far tighter than the raw standard deviations would suggest, which is the expected consequence of common random numbers: the seed-to-seed workload variation cancels inside each difference. Four of the five effects are consistent across all 30 matched runs. The recovery-burden difference is the exception, and its interval includes zero.

\begin{table}[t]
\centering
\caption{Mean paired effects over 30 matched seeds. Positive values favor the full or volume-aware model.}
\label{tab:paired}
\scriptsize
\resizebox{\columnwidth}{!}{%
\begin{tabular}{lcc}
\toprule
\textbf{Comparison and metric} & \textbf{Mean effect [95\% CI]} & \textbf{Wilcoxon} \\
\midrule
B4 $-$ B6 collusive gain & $0.0372~[0.0361,0.0383]$ & $p<0.001$ \\
B5 $-$ B6 RIR at $k=10$ & $0.0506~[0.0496,0.0515]$ & $p<0.001$ \\
Volume-aware $-$ static newcomer rep. & $0.0963~[0.0929,0.0997]$ & $p<0.001$ \\
Static $-$ volume-aware FLTR & $0.1950~[0.1613,0.2287]$ & $p<0.001$ \\
Static $-$ volume-aware recovery & $0.0183~[-0.1186,0.1552]$ & $p=0.168$ \\
\bottomrule
\end{tabular}%
}
\end{table}

\subsection{Repeated Bilateral Collusion}
Table~\ref{tab:collusion} reports the collusion results. B6 produces the lowest collusive gain at 0.1443, B4 reaches 0.1815, and the simple baselines all stay above 0.35. Because B4 keeps diversity and softening active, the gap between B4 and B6 isolates what the repeated-pair discount adds on its own.

\begin{table}[t]
\centering
\caption{Repeated bilateral collusion, mean $\pm$ standard deviation.}
\label{tab:collusion}
\scriptsize
\resizebox{\columnwidth}{!}{%
\setlength{\tabcolsep}{3.2pt}
\begin{tabular}{lccc}
\toprule
\textbf{Model} & \textbf{Gain} & \textbf{FTER} & \textbf{Colluder rep.} \\
\midrule
B1 & $0.3688\pm0.0055$ & $1.0000\pm0$ & $0.8688\pm0.0055$ \\
B2 & $0.3585\pm0.0277$ & $0.9467\pm0.0776$ & $0.8585\pm0.0277$ \\
B3 & $0.3806\pm0.0097$ & $1.0000\pm0$ & $0.8806\pm0.0097$ \\
B4 & $0.1815\pm0.0207$ & $0.8533\pm0.1167$ & $0.6815\pm0.0207$ \\
\textbf{B6} & $\mathbf{0.1443\pm0.0197}$ & $0.7333\pm0.1539$ & $\mathbf{0.6443\pm0.0197}$ \\
\bottomrule
\end{tabular}%
}
\end{table}

\begin{figure}[t]
\centering
\includegraphics[width=\columnwidth]{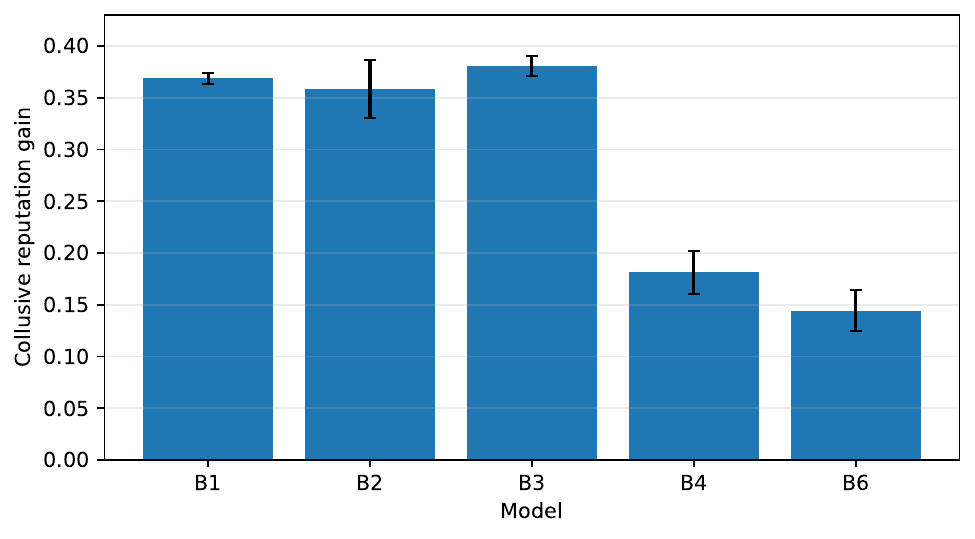}
\caption{Collusive reputation gain. Error bars show standard deviation over 30 seeded runs.}
\label{fig:collusionresults}
\end{figure}

The residual FTER of 0.7333 deserves emphasis. Colluders still complete successful-looking events, and many of them stay above the trust threshold, so nothing in this table amounts to colluder classification. The claim the data supports is smaller: repeated-pair and diversity control shrink the magnitude of the inflation.

\subsection{Identity Multiplicity}
At $k=10$, B6 reaches an RIR of 0.8723, while B1, B2, and B3 remain above 1.08 and B5 lands at 0.9228. The B5 to B6 difference isolates the identity-confidence weighting. Figure~\ref{fig:sybilresults} traces the trend over all tested values of $k$.

\begin{table}[t]
\centering
\caption{Identity multiplicity at $k=10$, mean $\pm$ standard deviation.}
\label{tab:sybil}
\scriptsize
\resizebox{\columnwidth}{!}{%
\setlength{\tabcolsep}{3.2pt}
\begin{tabular}{lccc}
\toprule
\textbf{Model} & \textbf{RIR} & \textbf{FTER} & \textbf{Identity rep.} \\
\midrule
B1 & $1.0838\pm0.0124$ & $1.0000\pm0$ & $0.8918\pm0.0042$ \\
B2 & $1.0908\pm0.0277$ & $0.9933\pm0.0254$ & $0.8935\pm0.0171$ \\
B3 & $1.1047\pm0.0164$ & $1.0000\pm0$ & $0.9077\pm0.0049$ \\
B5 & $0.9228\pm0.0218$ & $0.9833\pm0.0531$ & $0.7387\pm0.0123$ \\
\textbf{B6} & $\mathbf{0.8723\pm0.0207}$ & $0.9800\pm0.0551$ & $\mathbf{0.6983\pm0.0111}$ \\
\bottomrule
\end{tabular}%
}
\end{table}

\begin{figure}[t]
\centering
\includegraphics[width=\columnwidth]{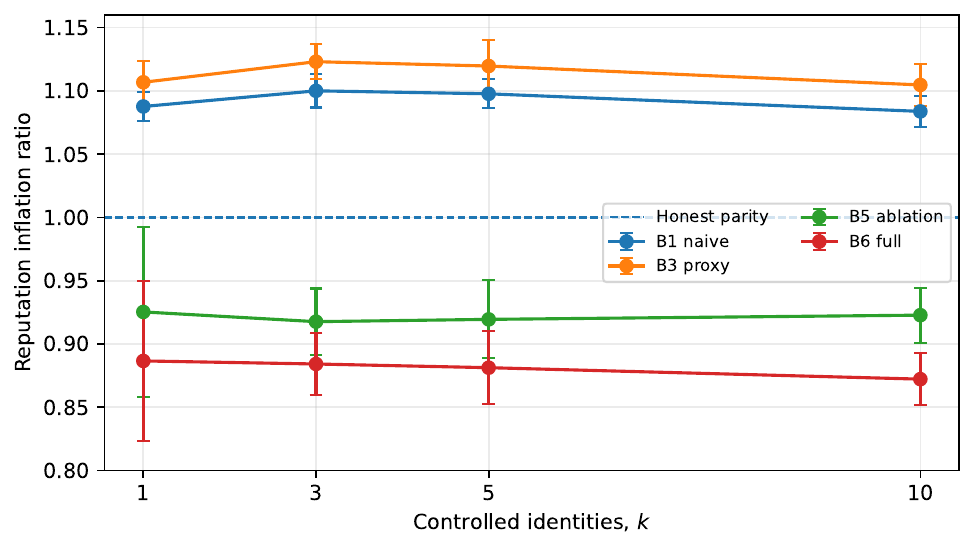}
\caption{RIR across controlled identity multiplicity. Values below one remain below the honest mean. B3 is a controlled TrustChain-inspired proxy, not a full reimplementation.}
\label{fig:sybilresults}
\end{figure}

FTER stays near 0.98 even for B6, so the experiment establishes no Sybil detection at all. What it does establish is reduced reputation amplification once low identity assurance is already known, which makes the mechanism a policy instrument rather than a detector.

\subsection{Newcomer Fairness}
Table~\ref{tab:newcomer} compares static and volume-aware decay on identical newcomer traces. Volume-aware decay lifts mean newcomer reputation from 0.6626 to 0.7589 and cuts FLTR from 0.3633 to 0.1683, while the conditional recovery burden barely moves (from 1.2772 to 1.2590, with a paired interval that includes zero). The benefit is preventive, in other words: fewer honest newcomers get pushed below the threshold in the first place, and recovery after a low-trust classification becomes no easier than before.

\begin{table}[t]
\centering
\caption{Honest newcomer fairness, mean $\pm$ standard deviation.}
\label{tab:newcomer}
\scriptsize
\resizebox{\columnwidth}{!}{%
\setlength{\tabcolsep}{3pt}
\begin{tabular}{lccc}
\toprule
\textbf{Decay} & \textbf{Mean rep.} & \textbf{FLTR} & \textbf{Recovery} \\
\midrule
Static & $0.6626\pm0.0168$ & $0.3633\pm0.0909$ & $1.2772\pm0.1529$ \\
Volume-aware & $\mathbf{0.7589\pm0.0221}$ & $\mathbf{0.1683\pm0.0895}$ & $1.2590\pm0.3933$ \\
\bottomrule
\end{tabular}%
}
\end{table}

\begin{figure}[t]
\centering
\includegraphics[width=\columnwidth]{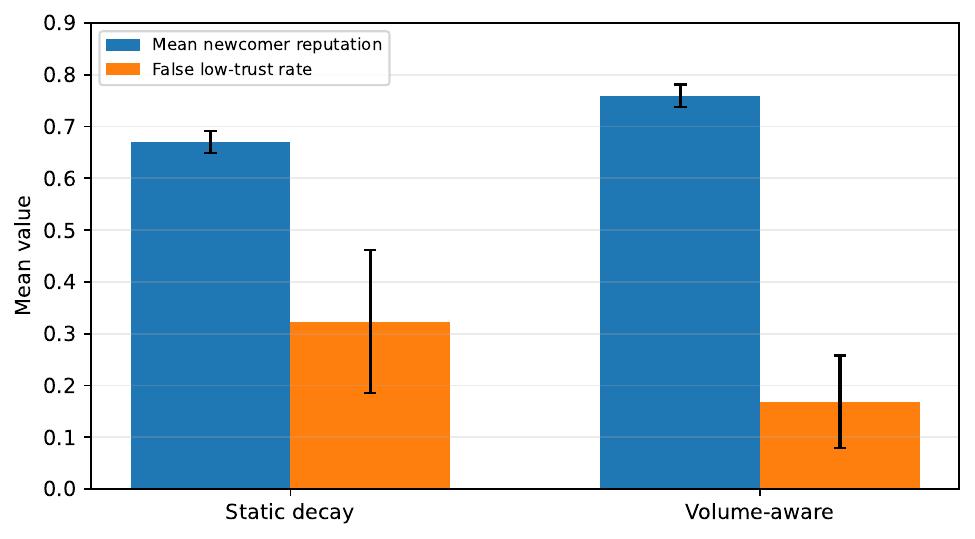}
\caption{Mean newcomer reputation and false low-trust rate. Error bars show standard deviation over 30 runs.}
\label{fig:newcomerresults}
\end{figure}

\subsection{Sensitivity and Trade-Offs}
\label{sec:sensitivity}
The one-factor-at-a-time sweeps vary $\eta$, $\kappa$, and $\phi$ under common seeds. Table~\ref{tab:sensitivity} lists selected endpoints beside the reference configuration. Three of the boundary values double as a consistency check on the implementation, because each one reduces the full model to a configuration we also ran as a separate code path: $\eta=0$ is B4, $\phi=1$ is B5, and $\kappa=0$ is static decay. The sweep endpoints reproduce those ablation results exactly.

\begin{table}[t]
\centering
\caption{Sensitivity summary. The middle value is the main reference setting.}
\label{tab:sensitivity}
\scriptsize
\resizebox{\columnwidth}{!}{%
\begin{tabular}{lccc}
\toprule
\textbf{Sweep} & \textbf{Low or boundary} & \textbf{Reference} & \textbf{High} \\
\midrule
$\eta$: collusive gain & 0.1815 at 0.00 & \textbf{0.1443 at 0.80} & 0.1426 at 1.40 \\
$\eta$: honest rep. & 0.7817 & \textbf{0.7623} & 0.7566 \\
$\kappa$: FLTR & 0.3633 at 0.00 & \textbf{0.1683 at 0.30} & 0.1533 at 0.80 \\
$\kappa$: newcomer rep. & 0.6626 & \textbf{0.7589} & 0.7852 \\
$\phi$: RIR & 0.8606 at 0.20 & \textbf{0.8723 at 0.35} & 0.9228 at 1.00 \\
$\phi$: identity rep. & 0.6889 & \textbf{0.6983} & 0.7387 \\
\bottomrule
\end{tabular}%
}
\end{table}

Pushing $\eta$ past 0.80 buys only small further reductions in collusive gain and pays for them with lower honest reputation, a plain robustness-utility trade-off. Raising $\kappa$ improves newcomer outcomes with diminishing returns, and raising $\phi$ weakens the attenuation and lets RIR climb, exactly as the model predicts. None of the sweeps points to a universal optimum, so calibration remains an operational task rather than a solved problem.

\section{Discussion and Limitations}
\subsection{What the Results Establish}
The strongest evidence comes from the matched baselines and ablations. B4 against B6 isolates the repeated-pair discount, B5 against B6 isolates the identity-confidence weighting, and static against volume-aware decay isolates the sparse-history mechanism. Common traces strip out workload noise, the paired intervals show that each effect holds run by run rather than only on average, and the sensitivity sweeps show that the directional findings do not hinge on one exact parameter value.

Three bounded statements survive this scrutiny. Repeated bilateral evidence receives diminishing influence and yields lower collusive gain. Governance-supplied identity confidence reduces the advantage of multiplicity. Volume-aware decay lowers false low-trust classification under the tested sparse-history workload.

\subsection{What the Results Do Not Establish}
The evaluation is a controlled simulation, not a deployment benchmark: the interaction outcomes and adversarial behavior are synthetic, the parameters are transparent but never calibrated against operational supply-chain records, and the B3 comparison is a proxy. Real seasonal demand, dispute cycles, delayed reporting, and correlated organizational behavior could all shift the absolute reputation trajectories.

The attack model is also incomplete. Whitewashing, certificate replacement, on-off attacks, badmouthing, strategic timing, and governance compromise all remain open, and the model assumes that adverse outcomes are captured correctly in the first place. A blockchain protects committed evidence from later alteration; it cannot certify that an external event was honestly observed before submission.

Finally, the off-chain computation boundary leaves an integrity dependency. Consortium members can recompute every update, but the smart contract never proves the nonlinear calculation itself. Deterministic fixed-point arithmetic, replicated computation, threshold approval of updates, or succinct proofs could each strengthen this in future designs.

\subsection{Future Work}
The natural next step is a working permissioned-blockchain prototype with measured transaction latency, update cost, and storage overhead, driven by operational or realistically derived interaction traces and used to calibrate $\eta$, $\kappa$, $\phi$, $\lambda$, and $\tau$. Adaptive adversaries belong in that evaluation as well, including on-off behavior, whitewashing, badmouthing, identity replacement, and governance that assigns assurance incorrectly. A second application beyond trade credit would then test whether the mechanism generalizes past a single risk scenario. Each of these additions probes something the present study cannot: external validity, implementation cost, and attack coverage.

\section{Conclusion}
We presented \RC, a blockchain-supported participant trust framework for governed supply-chain interactions. The model bounds repeated bilateral reinforcement, folds in governance-supplied identity confidence, and decays scores toward a neutral prior at a volume-aware rate. Across controlled simulations with 30 seeded runs and matched traces, it lowers collusive gain, identity-multiplicity inflation, and false low-trust classification of honest newcomers, and the seed-paired analysis shows those effects are consistent rather than averaged artifacts. The evidence supports a bounded reduction of reputation distortion rather than attack detection, and production use still calls for deployment measurement, calibration on real data, stronger numerical verification, and a broader adversarial evaluation.

\section*{Data Availability}
Every value reported here is drawn from a frozen set of audited raw and summary tables, and the result figures are regenerated from those tables by a single plotting script. All five paired effects in Table~\ref{tab:paired} reproduce exactly from the retained per-seed records. The simulator, its configuration files, the raw and summary tables, and the plotting script are available at github.

\section*{Declaration on the Use of AI Tools}
The AI was used for language editing and for restructuring the prose of this manuscript, and for assembling the \LaTeX{} source. The research design, the simulation code, the experimental configurations, the data, and the figures are the authors' own work and were neither generated nor selected by an AI system. The authors have verified every reported value against the underlying result files and accept full responsibility for the content.
%

\FloatBarrier
\balance
\begingroup
\footnotesize
\bibliographystyle{IEEEtran}
\bibliography{references}
\endgroup

\end{document}